\def\ruleone#1#2{\prooftree #1 \justifies #2 \endprooftree}
\def\ruletwo#1#2#3{\ruleone{#1\quad #2}{#3}}
\theoremstyle{plain}
\newtheorem{theorem}{Theorem}[section]
\newtheorem{lemma}[theorem]{Lemma}
\theoremstyle{definition}
\newtheorem{definition}[theorem]{Definition}
\newtheorem{example}[theorem]{Example}
\numberwithin{equation}{section}
\newcommand{\x}{\mathsf{x}}
\newcommand{\y}{\mathsf{y}}
\newcommand{\z}{\mathsf{z}}
\newcommand{\w}{\mathsf{w}}
\newcommand{\up}{\uparrow\!}
\newcommand{\un}{\underline{1}}
\begin{document}
\title{Alpha-conversion for lambda terms with explicit weakenings}
\author{George Cherevichenko}
\date{}
\maketitle
\begin{abstract}
Using explicit weakenings, we can define alpha-conversion by simple equations without any mention of free variables. 
\end{abstract}
\section{Terms and alpha-conversion}

\begin{figure}\caption{Terms and functions}\label{Terms}
\begin{framed}
\noindent
\begin{align*}
\x,\y,\z,\w::&= x\mid y\mid z\mid\ldots \tag{Variables}\\
M,N::&= \x \mid   MN \mid \lambda\x M  \mid \,\up M \tag{Terms}\\[20pt]
F::&=   \{\y\x\} \mid F_{\x}  \tag{Functions}
\end{align*}
\begin{align*}
F(MN) &= F(M)F(N)\\
F(\lambda\x M) &=\lambda\x F_{\x}(M)\\
 F_{\x}(\,\up M) &=\, \up F(M)\\
 F_{\x}(\x) &=\x\\
 F_{\x}(\z) &=\,\up F(\z) \tag{$\z\neq\x$}\\
\{\y\x\}(\,\up M) &=\, \up M\\
\{\y\x\}(\x) &= \y\\
\{\y\x\}(\z) &=\, \up\z \tag{$\z\neq\x$}
\end{align*}
\end{framed}
\end{figure}

\begin{figure}\caption{Alpha-conversion}\label{Alpha}
\begin{framed}
\noindent
$\begin{array}{ccc}
M=_{\alpha}M &&\ruleone{M_1=_{\alpha}M_2}{\up M_1=_{\alpha}\up M_2}\\[15pt]
\ruleone{M_1=_{\alpha}M_2}{M_2=_{\alpha}M_1} &&\ruletwo{M_1=_{\alpha}M_2}{N_1=_{\alpha}N_2}{M_1N_1=_{\alpha}M_2N_2}\\[15pt]
\ruletwo{M_1=_{\alpha}M_2}{M_2=_{\alpha}M_3}{M_1=_{\alpha}M_3} &&\ruleone{M_1=_{\alpha}M_2}{\lambda\x M_1=_{\alpha}\lambda\x M_2}\\[15pt]
&\lambda\x M=_{\alpha}\lambda\y\{\y\x\}M
\end{array}$
\end{framed}
\end{figure}

The set of terms $\Lambda\!\up$\, is shown in the Tab.~\ref{Terms}. A term of the form $\up M$ corresponds to usual $M[\up\,]$ (but now we need much less parentheses). For example, $\lambda\x\!\up M$ corresponds to $\lambda\x (M[\up\,])$ and $\up\lambda\x M$ corresponds to $(\lambda\x M)[\up\,]$
We also define a lot of functions $F\colon \Lambda\!\up\,\,\to \Lambda\!\up$\,. Each $F$ has a form $\{\y\x\}$ or $\{\y\x\}_{\z_1\ldots\z_n}$ A function of the form $\{\y\x\}$ roughly corresponds to $[\y/\x]$, but is not the same. It corresponds to $[1\cdot\!\up\,]$ or $\langle Fst,Snd\rangle$ and does nothing except variable renaming. A function of the form $F_{\x}$ corresponds to $[\Uparrow\! F]$. I shall usually write $FM$ instead of $F(M)$. For example,
$\{yx\}\lambda zx$ is shorthand for $\{yx\}(\lambda zx)$

Alpha-conversion is the smallest compatible equivalence relation such that $\lambda\x M=_{\alpha}\lambda\y\{\y\x\}M$ for all $\x,\y,M$ (no restrictions). See Tab.~\ref{Alpha}.

\begin{example}$\lambda xz=_{\alpha}\lambda yz=_{\alpha}\lambda z\!\up z=_{\alpha}\lambda x\!\up z=_{\alpha}\lambda y\!\up z$\\
$\lambda xz=_{\alpha}\lambda z\{zx\}z=\lambda z\!\up z$\\
$\lambda yz=_{\alpha}\lambda z\{zy\}z=\lambda z\!\up z$\\
$\lambda xz=_{\alpha}\lambda x\{xx\}z=\lambda x\!\up z$
\end{example}

\begin{example}$\lambda x\lambda z\,x=_{\alpha}\lambda y\lambda z\,y$\\
$\lambda x\lambda z\,x=_{\alpha}\lambda y\{yx\}\lambda z\,x=\lambda y\lambda z \{yx\}_{z}\,x=\lambda y\lambda z \up\{yx\}\,x=\lambda y\lambda z \up y$\\
$\lambda y\lambda z\,y=_{\alpha}\lambda y\{yy\}\lambda z\,y=\lambda y\lambda z \{yy\}_{z}\,y=\lambda y\lambda z \up\{yy\}\,y=\lambda y\lambda z \up y$

\end{example}

\begin{definition} Context is a finite list of variables, may be with repetitions. $nil$ is the empty context. For example $\Gamma=x,x,y,z$ is a context. $\Gamma,\Delta$ is concatenation of $\Gamma$ and $\Delta$.
\end{definition}

\begin{definition} $\{\x_1\x_2\}_{\Gamma}$ is shorthand for $ \{\x_1\x_2\}_{\y_1\ldots\y_n}$ if $\Gamma=\y_1\ldots\y_n$\\
$\{\x_1\x_2\}_{nil}$ is simply $\{\x_1\x_2\}$ \\
Note that\\
$\{\x_1\x_2\}_{\Gamma,\y}\!\up M=\,\up \{\x_1\x_2\}_{\Gamma} M$\\
$\{\x_1\x_2\}_{\Gamma,\y}\,\y=\y$\\
$\{\x_1\x_2\}_{\Gamma,\y}\,\z=\,\up \{\x_1\x_2\}_{\Gamma}\, \z$ \quad if $\z\neq\y$

\end{definition}

\begin{lemma}\label{Trans}
$\{\x_1\x_2\}\{\x_2\x_3\}M=\{\x_1\x_3\}M$
\end{lemma}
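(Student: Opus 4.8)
The plan is to prove a stronger, context-indexed version of the equation by induction on the structure of $M$. Inducting directly on the bare statement fails at the abstraction case: rewriting $\{\x_2\x_3\}(\lambda\w M')=\lambda\w\{\x_2\x_3\}_{\w}M'$ replaces the two outer functions by their subscripted variants, which the original equation does not cover. So I would first generalize to the claim that for every context $\Gamma$ and all $\x_1,\x_2,\x_3$,
\[
\{\x_1\x_2\}_{\Gamma}\{\x_2\x_3\}_{\Gamma}M=\{\x_1\x_3\}_{\Gamma}M,
\]
the Lemma being the instance $\Gamma=nil$. The induction hypothesis is taken for \emph{all} contexts simultaneously, which is precisely what lets the abstraction case close.

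The two structural cases that carry no subscripted subtlety are immediate. For an \textbf{application} $M=M_1M_2$, apply $F(M_1M_2)=F(M_1)F(M_2)$ to both sides and invoke the hypothesis on $M_1$ and $M_2$. For a \textbf{weakening} $M=\up M'$, split on whether $\Gamma$ is empty: if $\Gamma=nil$ then $\{\x_2\x_3\}(\up M')=\up M'$ and $\{\x_1\x_2\}(\up M')=\up M'$, matching $\{\x_1\x_3\}(\up M')=\up M'$; if $\Gamma=\Gamma',\y$, push both functions through the weakening via $\{\x_i\x_j\}_{\Gamma',\y}\up N=\up\{\x_i\x_j\}_{\Gamma'}N$ and apply the hypothesis to the subterm $M'$ with the shorter context $\Gamma'$.

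The decisive case is \textbf{abstraction}, $M=\lambda\w M'$. Using $F(\lambda\w M')=\lambda\w F_{\w}(M')$ on each side reduces the goal to $\{\x_1\x_2\}_{\Gamma,\w}\{\x_2\x_3\}_{\Gamma,\w}M'=\{\x_1\x_3\}_{\Gamma,\w}M'$, which is exactly the induction hypothesis for the strict subterm $M'$ taken at the \emph{extended} context $\Gamma,\w$. This is where the generalization over all contexts pays off, and identifying this strengthening is the main obstacle; everything after it is bookkeeping.

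Finally, the \textbf{variable} case $M=\w$ is a leaf of the structural induction, so no smaller term is available; here I would run a secondary induction on the length of $\Gamma$. For $\Gamma=nil$ the defining clauses of $\{\y\x\}$ suffice: if $\w=\x_3$ both sides equal $\x_1$, and if $\w\neq\x_3$ then $\{\x_2\x_3\}\w=\up\w$ is absorbed by $\{\x_1\x_2\}$, giving $\up\w=\{\x_1\x_3\}\w$. For $\Gamma=\Gamma',\y$: if $\w=\y$ both sides are $\y$; otherwise the $\z\neq\y$ clause turns the left side into $\up\{\x_1\x_2\}_{\Gamma'}\{\x_2\x_3\}_{\Gamma'}\w$, and the inner hypothesis for $\Gamma'$ rewrites this to $\up\{\x_1\x_3\}_{\Gamma'}\w=\{\x_1\x_3\}_{\Gamma',\y}\w$. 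The only genuinely delicate point throughout is the absorption behaviour of $\up$ under $\{\y\x\}$, which is what makes the renaming composition behave transitively rather than accumulating spurious weakenings.
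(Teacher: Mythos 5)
Your proposal is correct and follows essentially the same route as the paper: generalize to $\{\x_1\x_2\}_{\Gamma}\{\x_2\x_3\}_{\Gamma}M=\{\x_1\x_3\}_{\Gamma}M$ for arbitrary $\Gamma$, induct on the structure of $M$ using the extended context $\Gamma,\w$ in the abstraction case, and run a secondary induction on the length of $\Gamma$ for the variable case. The case analyses, including the absorption of $\up$ in the empty-context weakening and variable subcases, match the paper's proof step for step.
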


\begin{proof}We shall prove the stronger statement \\
$\{\x_1\x_2\}_{\Gamma}\{\x_2\x_3\}_{\Gamma}\,M=\{\x_1\x_3\}_{\Gamma}\,M$\\
($\Gamma$ is arbitrary) by induction on the structure of $M$. \\[5pt]
Case 1. $M=\lambda\y N$ \\
Prove the induction step\\[5pt]
\ruleone{\{\x_1\x_2\}_{\Gamma,\y}\{\x_2\x_3\}_{\Gamma,\y}\, N=\{\x_1\x_3\}_{\Gamma,\y}\, N}{\{\x_1\x_2\}_{\Gamma}\{\x_2\x_3\}_{\Gamma}\,\lambda\y N=\{\x_1\x_3\}_{\Gamma}\,\lambda\y N}\\[5pt]
This is true because\\
$\{\x_1\x_2\}_{\Gamma}\{\x_2\x_3\}_{\Gamma}\,\lambda\y N=\lambda\y\{\x_1\x_2\}_{\Gamma,\y}\{\x_2\x_3\}_{\Gamma,\y}\, N$\\
$\{\x_1\x_3\}_{\Gamma}\,\lambda\y N=\lambda\y\{\x_1\x_3\}_{\Gamma,\y}\,N$\\[5pt]
Case 2. $M$ is a variable $\z$. Induction on the length of $\Gamma$.\\
Suppose  $\Gamma=nil$ and prove\\
$\{\x_1\x_2\}\{\x_2\x_3\}\z=\{\x_1\x_3\}\z$\\
If $\z=\x_3$ then\\
$\{\x_1\x_2\}\{\x_2\x_3\}\x_3=\x_1=\{\x_1\x_3\}\x_3$\\
If $\z\neq\x_3$ then\\
$\{\x_1\x_2\}\{\x_2\x_3\}\z=\,\up\z=\{\x_1\x_3\}\z$\\
Suppose $\Gamma=\Delta,\y$ and prove the induction step\\[5pt]
\ruleone{\{\x_1\x_2\}_{\Delta}\{\x_2\x_3\}_{\Delta}\,\z=\{\x_1\x_3\}_{\Delta}\,\z}
{\{\x_1\x_2\}_{\Delta,\y}\{\x_2\x_3\}_{\Delta,\y}\,\z=\{\x_1\x_3\}_{\Delta,\y}\,\z}\\[5pt]
If $\z=\y$ then\\
$\{\x_1\x_2\}_{\Delta,\y}\{\x_2\x_3\}_{\Delta,\y}\,\y=\y=\{\x_1\x_3\}_{\Delta,\y}\,\y$\\[5pt]
If $\z\neq\y$ then\\
$\{\x_1\x_2\}_{\Delta,\y}\{\x_2\x_3\}_{\Delta,\y}\,\z=\,\up\{\x_1\x_2\}_{\Delta}\{\x_2\x_3\}_{\Delta}\,\z$\\
$\{\x_1\x_3\}_{\Delta,\y}\,\z=\,\up\{\x_1\x_3\}_{\Delta}\,\z$\\[5pt]
Case 3. $M=\,\up N$\\
If $\Gamma=nil$ then\\
$\{\x_1\x_2\}\{\x_2\x_3\}\up N=\,\up N=\{\x_1\x_3\}\up N$\\
If $\Gamma=\Delta,\y$ then (induction on the structure of $M$)\\[5pt]
\ruleone{\{\x_1\x_2\}_{\Delta}\{\x_2\x_3\}_{\Delta}\, N=\{\x_1\x_3\}_{\Delta}\, N}
{\{\x_1\x_2\}_{\Delta,\y}\{\x_2\x_3\}_{\Delta,\y}\up N=\{\x_1\x_3\}_{\Delta,\y}\up N}\\[5pt]
because\\
$\{\x_1\x_2\}_{\Delta,\y}\{\x_2\x_3\}_{\Delta,\y}\up N=\,\up\{\x_1\x_2\}_{\Delta}\{\x_2\x_3\}_{\Delta}\, N$\\
$\{\x_1\x_3\}_{\Delta,\y}\up N=\,\up\{\x_1\x_3\}_{\Delta}\,N$\\[5pt]
Case 4. $M=N_1N_2$\\
Straightforward, because $F(N_1N_2)=F(N_1)F(N_2)$
\end{proof}

\begin{lemma}\label{Commute} $\{\y\x\} F_{\x} M= F_{\y}\{\y\x\}M$
\end{lemma}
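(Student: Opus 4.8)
The plan is to prove, by induction on the structure of $M$, the stronger statement that for every context $\Gamma$,
\[
\{\y\x\}_{\Gamma}\,(F_{\x})_{\Gamma}\,M = (F_{\y})_{\Gamma}\,\{\y\x\}_{\Gamma}\,M,
\]
the lemma being the case $\Gamma=nil$. The generalisation is forced by the $\lambda$-case, and its shape is slightly asymmetric: the base renaming $\{\y\x\}$ acquires only the new binders $\Gamma$ as subscripts, whereas $F$ carries $\x$ (resp.\ $\y$) as its innermost subscript together with $\Gamma$. Throughout, $F$ is treated as a black box: I never induct on $F$, only on $M$ and, for variables, on the length of $\Gamma$, using solely the generic rules $G(\lambda\z P)=\lambda\z\,G_{\z}P$ and the three clauses governing $G_{\z}$ applied to $\up N$, to $\z$, and to a different variable.

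First I would dispatch the easy cases. For $M=N_1N_2$ both sides split through application and the two structural induction hypotheses finish it. For $M=\lambda\z N$, pushing each side under the binder via $G(\lambda\z P)=\lambda\z\,G_{\z}P$ turns the left side into $\lambda\z\,\{\y\x\}_{\Gamma,\z}(F_{\x})_{\Gamma,\z}N$ and the right side into $\lambda\z\,(F_{\y})_{\Gamma,\z}\{\y\x\}_{\Gamma,\z}N$, so the goal is exactly the generalised statement for the subterm $N$ with context $\Gamma,\z$, supplied by the structural hypothesis. For $M=\up N$ I split on $\Gamma$: when $\Gamma=nil$ both sides compute directly to $\up F(N)$, since the base renaming and the lifted $F$ pass the already-weakened term through and the outer $\{\y\x\}$ is absorbed, so no hypothesis is needed; when $\Gamma=\Delta,\w$ both sides peel off one $\up{}$ and reduce to the statement for $N$ with context $\Delta$.

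The genuine work is the variable case $M=\z$, handled by a nested induction on the length of $\Gamma$, exactly as in Lemma~\ref{Trans}. For $\Gamma=nil$ I case on $\z=\x$ versus $\z\neq\x$: if $\z=\x$ both sides reduce to $\y$ (using $F_{\x}(\x)=\x$, $\{\y\x\}(\x)=\y$, and $F_{\y}(\y)=\y$); if $\z\neq\x$ both sides reduce to $\up F(\z)$ (the left via $\{\y\x\}(\up F(\z))=\up F(\z)$, the right via $F_{\y}(\up\z)=\up F(\z)$). For $\Gamma=\Delta,\w$ I case on $\z=\w$, where both sides fix $\z$, versus $\z\neq\w$, where both sides emit one $\up{}$ and reduce to the statement for $\z$ with the shorter context $\Delta$. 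I expect the main obstacle to be purely organisational: choosing the correct generalisation so that the $\lambda$-case closes (the asymmetric subscripting above), and keeping straight, in the variable base case, the contrasting behaviour of the bare renaming $\{\y\x\}$—which renames $\x$ but leaves every $\up{}$-term untouched—against the lifted $F_{\x},F_{\y}$.
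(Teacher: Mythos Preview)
Your proposal is correct and follows essentially the same route as the paper: the same strengthened statement $\{\y\x\}_{\Gamma}\,F_{\x,\Gamma}\,M=F_{\y,\Gamma}\,\{\y\x\}_{\Gamma}\,M$, the same structural induction on $M$, and the same nested length-of-$\Gamma$ induction in the variable and $\up$ cases with the same case splits. Your notation $(F_{\x})_{\Gamma}$ is just the paper's $F_{\x,\Gamma}$, and every computation you sketch matches the paper's line by line.
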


\begin{proof} We shall prove the stronger statement \\
$\{\y\x\}_{\Gamma}\, F_{\x,\Gamma}\, M= F_{\y,\Gamma}\,\{\y\x\}_{\Gamma}\,M$\\
($\Gamma$ is arbitrary) by induction on the structure of $M$.\\[5pt]
Case 1. $M=\lambda\z N$ \\
Prove the induction step\\[5pt]
\ruleone{\{\y\x\}_{\Gamma,\z}\, F_{\x,\Gamma,\z}\, N= F_{\y,\Gamma,\z}\,\{\y\x\}_{\Gamma,\z}\,N}
{\{\y\x\}_{\Gamma}\, F_{\x,\Gamma}\, \lambda\z N= F_{\y,\Gamma}\,\{\y\x\}_{\Gamma}\,\lambda\z N}\\[5pt]
This is true because\\
$\{\y\x\}_{\Gamma}\, F_{\x,\Gamma}\, \lambda\z N=\lambda\z\,\{\y\x\}_{\Gamma,\z}\, F_{\x,\Gamma,\z}\, N$\\
$F_{\y,\Gamma}\,\{\y\x\}_{\Gamma}\,\lambda\z N=\lambda\z\, F_{\y,\Gamma,\z}\,\{\y\x\}_{\Gamma,\z}\,N$\\[5pt]
Case 2. $M$ is a variable $\z$. Induction on the length of $\Delta$.\\
Suppose $\Gamma=nil$ and prove\\
$\{\y\x\} F_{\x}\, \z= F_{\y}\{\y\x\}\z$\\
If $\z=\x$ then\\
$\{\y\x\} F_{\x}\, \x=\y= F_{\y}\{\y\x\}\x$\\
If $\z\neq\x$ then\\
$\{\y\x\} F_{\x}\, \z=\,\up F\z= F_{\y}\{\y\x\}\z$\\
Suppose $\Gamma=\Delta,\w$ and prove the induction step\\[5pt]
\ruleone{\{\y\x\}_{\Delta} F_{\x,\Delta}\, \z= F_{\y,\Delta}\{\y\x\}_{\Delta}\,\z}
{\{\y\x\}_{\Delta,\w} F_{\x,\Delta,\w}\, \z= F_{\y,\Delta,\w}\{\y\x\}_{\Delta,\w}\,\z}\\[5pt]
If $\z=\w$ then\\
$\{\y\x\}_{\Delta,\w} F_{\x,\Delta,\w}\, \w=\w= F_{\y,\Delta,\w}\{\y\x\}_{\Delta,\w}\,\w$\\
If $\z\neq\w$ then\\
$\{\y\x\}_{\Delta,\w} F_{\x,\Delta,\w}\, \z=\,\up\{\y\x\}_{\Delta} F_{\x,\Delta}\, \z$\\
$ F_{\y,\Delta,\w}\{\y\x\}_{\Delta,\w}\,\z=\,\up F_{\y,\Delta}\{\y\x\}_{\Delta}\,\z$\\[5pt]
Case 3. $M=\,\up N$\\
If $\Gamma=nil$ then\\
$\{\y\x\} F_{\x}\! \up N=\,\up FN= F_{\y}\{\y\x\}\!\up N$\\
If $\Gamma=\Delta,\w$ then (induction on the structure of $M$)\\[5pt]
\ruleone{\{\y\x\}_{\Delta} F_{\x,\Delta}\, N= F_{\y,\Delta}\{\y\x\}_{\Delta}\, N}
{\{\y\x\}_{\Delta,\w} F_{\x,\Delta,\w} \up N= F_{\y,\Delta,\w}\{\y\x\}_{\Delta,\w}\up N}\\[5pt]
because\\
$\{\y\x\}_{\Delta,\w} F_{\x,\Delta,\w} \up N=\,\up\{\y\x\}_{\Delta} F_{\x,\Delta}\, N$\\
$F_{\y,\Delta,\w}\{\y\x\}_{\Delta,\w}\up N=\,\up F_{\y,\Delta}\{\y\x\}_{\Delta}\, N$\\[5pt]
Case 4. $M=N_1N_2$\\
Straightforward, because $F(N_1N_2)=F(N_1)F(N_2)$
\end{proof}

\begin{theorem}If $M=_{\alpha}N$ then $F(M)=_{\alpha}F(N)$
\end{theorem}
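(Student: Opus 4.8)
The plan is to induct on the derivation of $M =_{\alpha} N$, proving for each derivation the statement that $F(M) =_{\alpha} F(N)$ holds \emph{for every} function $F$ simultaneously. Keeping the quantifier over $F$ inside the induction is essential: the defining clauses $F(\lambda\x M) = \lambda\x F_{\x}(M)$ and $F_{\x}(\up M) = \up F(M)$ rewrite a term acted on by $F$ into a subterm acted on by a \emph{different} function (namely $F_{\x}$, respectively the underlying $F$), so the inductive hypothesis must already be available for all functions. Recall also that by the grammar every $F$ is either a base renaming $\{\y\x\}$ or of the lifted shape $G_{\x}$.

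The rules generating $=_{\alpha}$ fall into three groups. Reflexivity, symmetry and transitivity are immediate: I apply the same structural rule to the images $F(M_i)$ and feed in the inductive hypotheses. The congruences are handled by pushing $F$ through the constructor via the definitions in Tab.~\ref{Terms}. For application, $F(M_1 N_1) = F(M_1)F(N_1)$ and $F(M_2 N_2) = F(M_2)F(N_2)$, so the hypotheses $F(M_1) =_{\alpha} F(M_2)$ and $F(N_1) =_{\alpha} F(N_2)$ recombine through the application congruence. For the $\lambda$-congruence, $F(\lambda\x M_i) = \lambda\x F_{\x}(M_i)$, and I invoke the hypothesis for the function $F_{\x}$ to get $F_{\x}(M_1) =_{\alpha} F_{\x}(M_2)$, then close with the $\lambda$-congruence. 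For the weakening congruence I split on the shape of $F$: if $F = \{\y\x\}$ then $F(\up M_i) = \up M_i$ and the conclusion is just the weakening congruence applied to the premise; if $F = G_{\x}$ then $F(\up M_i) = \up G(M_i)$, and the hypothesis for $G$ together with the weakening congruence finishes it.

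The real content is the generating axiom $\lambda\x M =_{\alpha} \lambda\y\{\y\x\}M$, which is a leaf of the derivation: there is nothing to assume, only to compute. Unfolding the definitions gives $F(\lambda\x M) = \lambda\x F_{\x}(M)$ and $F(\lambda\y\{\y\x\}M) = \lambda\y F_{\y}(\{\y\x\}M)$. Applying the generating axiom itself to the left-hand term yields $\lambda\x F_{\x}(M) =_{\alpha} \lambda\y\{\y\x\}(F_{\x}(M))$, so it remains only to identify $\{\y\x\}(F_{\x}(M))$ with $F_{\y}(\{\y\x\}M)$. This is exactly the commutation of Lemma~\ref{Commute}, which supplies the genuine equality $\{\y\x\}F_{\x}M = F_{\y}\{\y\x\}M$ (stronger than $=_{\alpha}$). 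Chaining these, $F(\lambda\x M) =_{\alpha} \lambda\y F_{\y}(\{\y\x\}M) = F(\lambda\y\{\y\x\}M)$, as required.

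The main obstacle is precisely this axiom case, and it is already discharged by Lemma~\ref{Commute}; everything else is routine congruence bookkeeping. The one genuine point of care outside it is to carry the universal quantifier over $F$ through the induction, so that the $\lambda$-step and the weakening-step may legitimately reuse the hypothesis for $F_{\x}$ and for the underlying $G$.
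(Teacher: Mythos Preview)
Your proof is correct and follows the same route as the paper: induction on the derivation of $=_{\alpha}$, with the only substantive case being the generating axiom, which is discharged by unfolding $F$ through the $\lambda$ and invoking Lemma~\ref{Commute}. The paper's proof writes out only that axiom case and leaves the equivalence and congruence rules implicit; you spell these out and, in particular, make explicit the need to carry the universal quantifier over $F$ through the induction so that the $\lambda$- and weakening-congruence steps may appeal to the hypothesis at $F_{\x}$ (resp.\ at the underlying $G$)---a point the paper glosses over but which your presentation handles cleanly.
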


\begin{proof} We need to prove $F\lambda\x M=_{\alpha}F\lambda\y\{\y\x\} M$\\
$F\lambda\x M=\lambda\x F_{\x}M =_{\alpha}\lambda\y\{\y\x\} F_{\x} M=\lambda\y F_{\y}\{\y\x\} M=F\lambda\y\{\y\x\} M$\\
(the third equality by the previous lemma).
\end{proof}


\section{de Bruijn's terms}

For each variable $\z$ and term $M$ we define the term $db_{\z}(M)$ such that\linebreak $M=_{\alpha}N$ iff $db_{\z}(M)=db_{\z}(N)$

\begin{definition}$ $\\
$db_{\z}(\x)=\x$\\
$db_{\z}(\up M)=\,\up db_{\z}(M)$\\
$db_{\z}(MN)=db_{\z}(M)db_{\z}(N)$\\
$db_{\z}(\lambda\x M)=\lambda\z\{\z\x\}db_{\z}(M)$
\end{definition}

\begin{example}$ $\\
$db_z(\lambda x\lambda y\,y)=\lambda z\lambda z\, z$ \\
$db_z(\lambda x\lambda y\,x)=\lambda z\lambda z \up z$ \quad($db$ means ``de Bruijn'')\\
$db_z(\lambda y\,x)=\lambda z \up x$
\end{example}

\begin{lemma} $db_{\z}(M)=_{\alpha} M$
\end{lemma}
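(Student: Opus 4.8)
The plan is to prove $db_{\z}(M) =_{\alpha} M$ by induction on the structure of $M$, using the compatibility rules for $=_{\alpha}$ from Tab.~\ref{Alpha} together with the preceding Theorem, which guarantees that every function $F$ preserves $=_{\alpha}$.

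The three non-binding cases should be routine. For a variable $M = \x$ we have $db_{\z}(\x) = \x$, so $db_{\z}(M) =_{\alpha} M$ by reflexivity. For $M = \up N$ the definition gives $db_{\z}(\up N) = \up db_{\z}(N)$, and combining the induction hypothesis $db_{\z}(N) =_{\alpha} N$ with the $\up$-compatibility rule yields $\up db_{\z}(N) =_{\alpha} \up N$. For $M = N_1 N_2$ I would likewise feed the two induction hypotheses into the application-compatibility rule.

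The only interesting case is $M = \lambda\x N$, where $db_{\z}(\lambda\x N) = \lambda\z\,\{\z\x\}\,db_{\z}(N)$. Here I would first apply the Theorem to the induction hypothesis $db_{\z}(N) =_{\alpha} N$ with $F = \{\z\x\}$, obtaining $\{\z\x\}\,db_{\z}(N) =_{\alpha} \{\z\x\}\,N$. The $\lambda$-compatibility rule then lifts this to $\lambda\z\,\{\z\x\}\,db_{\z}(N) =_{\alpha} \lambda\z\,\{\z\x\}\,N$. Finally, the defining alpha-conversion rule read with $\y := \z$, namely $\lambda\x N =_{\alpha} \lambda\z\,\{\z\x\}\,N$, together with symmetry identifies $\lambda\z\,\{\z\x\}\,N$ with $\lambda\x N$. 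Chaining these steps by transitivity closes the case.

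The crux I anticipate is precisely the passage of the induction hypothesis through the renaming $\{\z\x\}$: this is exactly what the Theorem supplies, so the entire argument hinges on that congruence result being available. Note that Lemmas~\ref{Trans} and~\ref{Commute} are not invoked directly here, having already been consumed in establishing the Theorem.
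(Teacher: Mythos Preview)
Your proof is correct, but in the $\lambda$-case it takes a detour that the paper avoids. You apply the alpha rule to the \emph{original} body $N$, obtaining $\lambda\x N =_{\alpha} \lambda\z\{\z\x\}N$, and then you must still bridge $\lambda\z\{\z\x\}db_{\z}(N)$ to $\lambda\z\{\z\x\}N$; this is where you invoke the Theorem that each $F$ preserves $=_{\alpha}$. The paper instead applies the alpha rule to $db_{\z}(N)$ itself: from $\lambda\x\,db_{\z}(N) =_{\alpha} \lambda\z\{\z\x\}db_{\z}(N)$ (symmetry) and then $\lambda\x\,db_{\z}(N) =_{\alpha} \lambda\x N$ directly by the induction hypothesis plus $\lambda$-compatibility. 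So your claim that ``the entire argument hinges on that congruence result'' is too strong: the paper's route uses only the raw alpha rule and the compatibility clauses, never the Theorem. Your version is still valid since the Theorem is already available at this point, but it is slightly less economical.
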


\begin{proof} Induction on the structure of $M$.\\
If $M=\lambda\x N$ and $db_{\z}(N)=_{\alpha}N$ then\\
$db_{\z}(\lambda\x N)=\lambda\z\{\z\x\}db_{\z}(N)=_{\alpha}\lambda\x\, db_{\z}(N)=_{\alpha}\lambda\x N$

\end{proof}

\begin{theorem} If $db_{\z}(M)=db_{\z}(N)$ then $M=_{\alpha} N$
\end{theorem}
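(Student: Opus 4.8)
The plan is to read this off the immediately preceding lemma, which states $db_{\z}(M) =_{\alpha} M$ for every term $M$; this is the soundness half of the correspondence and already packages all the real work (it in turn rests on the Commute lemma and on the theorem that $F$ preserves $=_{\alpha}$). Given the hypothesis $db_{\z}(M) = db_{\z}(N)$, where here $=$ is literal syntactic equality of $\Lambda\!\up$-terms, I would simply chain three instances of $=_{\alpha}$.

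Concretely, the preceding lemma gives $db_{\z}(N) =_{\alpha} N$, while the same lemma together with symmetry of $=_{\alpha}$ gives $M =_{\alpha} db_{\z}(M)$. Since $db_{\z}(M)$ and $db_{\z}(N)$ are by hypothesis the very same term, reflexivity yields $db_{\z}(M) =_{\alpha} db_{\z}(N)$. Transitivity of $=_{\alpha}$ then closes the argument: $M =_{\alpha} db_{\z}(M) = db_{\z}(N) =_{\alpha} N$, hence $M =_{\alpha} N$.

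There is no real obstacle in this direction --- it is the easy half of the ``iff''. The substance lives in the converse implication (that $M =_{\alpha} N$ forces $db_{\z}(M) = db_{\z}(N)$), which would demand showing that $db_{\z}$ is genuinely invariant under each generating clause of $=_{\alpha}$, in particular that $db_{\z}(\lambda\x M) = db_{\z}(\lambda\y\{\y\x\}M)$; but that is not what is asked here. The one thing to guard against is overthinking the present statement: once $db_{\z}(M) =_{\alpha} M$ is in hand, the claim is purely formal manipulation of the equivalence relation, with the two occurrences of $db_{\z}$ collapsing via the single syntactic hypothesis.
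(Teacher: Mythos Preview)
Your proof is correct and is exactly the paper's argument: the one-line chain $M=_{\alpha}db_{\z}(M)=db_{\z}(N)=_{\alpha}N$ using the preceding lemma together with symmetry and transitivity of $=_{\alpha}$. One small inaccuracy in your commentary: the lemma $db_{\z}(M)=_{\alpha}M$ does \emph{not} rest on the Commute lemma or on preservation of $=_{\alpha}$ by $F$; its inductive $\lambda$-case uses only the basic axiom $\lambda\x M=_{\alpha}\lambda\y\{\y\x\}M$ and congruence, so the ``real work'' you allude to is lighter than you suggest.
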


\begin{proof} $M=_{\alpha}db_{\z}(M)=db_{\z}(N)=_{\alpha} N$
\end{proof}

\begin{lemma} $F(db_{\z}M)=db_{\z}(FM)$
\end{lemma}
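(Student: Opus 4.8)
The plan is to prove, by induction on the structure of $M$, the stronger statement $F_{\Gamma}(db_{\z} M) = db_{\z}(F_{\Gamma} M)$ for an arbitrary context $\Gamma$; the lemma is then the case $\Gamma = nil$. Generalizing over $\Gamma$ is forced by the abstraction case, exactly as in the proofs of Lemmas~\ref{Trans} and~\ref{Commute}: pushing $db_{\z}$ and $F$ through a binder lengthens the subscript list, so the induction hypothesis must already be available at the extended context. I would dispatch the application case immediately (both $db_{\z}$ and $F$ distribute over $MN$) and handle the two base cases with a short auxiliary induction on the length of $\Gamma$.

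The heart of the argument is the abstraction case $M = \lambda\w N$. Unfolding the definitions on each side gives, on the left, $F_{\Gamma}(db_{\z}(\lambda\w N)) = F_{\Gamma}(\lambda\z\{\z\w\}db_{\z}N) = \lambda\z\, F_{\Gamma,\z}(\{\z\w\}db_{\z}N)$, and on the right, $db_{\z}(F_{\Gamma}(\lambda\w N)) = db_{\z}(\lambda\w\, F_{\Gamma,\w}N) = \lambda\z\{\z\w\}db_{\z}(F_{\Gamma,\w}N)$. Cancelling the common $\lambda\z$, the goal becomes $F_{\Gamma,\z}(\{\z\w\}db_{\z}N) = \{\z\w\}db_{\z}(F_{\Gamma,\w}N)$.

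Now I would read $F_{\Gamma,\z}$ as $(F_{\Gamma})_{\z}$ and apply Lemma~\ref{Commute} with the function $F_{\Gamma}$ playing the role of $F$ and the renaming $\{\z\w\}$, which yields $(F_{\Gamma})_{\z}(\{\z\w\}P) = \{\z\w\}(F_{\Gamma})_{\w}(P)$ for every $P$. Taking $P = db_{\z}N$ rewrites the left-hand side as $\{\z\w\}F_{\Gamma,\w}(db_{\z}N)$, and the structural induction hypothesis for $N$ at the context $\Gamma,\w$ turns $F_{\Gamma,\w}(db_{\z}N)$ into $db_{\z}(F_{\Gamma,\w}N)$, which is precisely the right-hand side. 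Because Lemma~\ref{Commute} carries no side conditions, I need no freshness assumption relating $\z$, $\w$, and the variables occurring in $F_{\Gamma}$.

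The base cases are routine. When $M$ is a variable, $db_{\z}$ leaves it unchanged, while $F_{\Gamma}$ sends any variable to a string of weakenings applied to a variable---an abstraction-free term---so $db_{\z}$ fixes $F_{\Gamma}M$ as well; this last fact follows by a trivial induction on $\Gamma$. When $M = \up N$, for $\Gamma = nil$ both sides collapse to $\up db_{\z}N$ since $\{\y\x\}(\up P) = \up P$, and for $\Gamma = \Delta,\w$ one peels off the outer subscript via $F_{\Delta,\w}(\up P) = \up F_{\Delta}P$ and invokes the structural hypothesis for $N$ at the shorter context $\Delta$. The only genuine obstacle is the abstraction case: one must notice that the renaming $\{\z\w\}$ that $db$ inserts between $F$ and its argument is exactly the obstruction that Lemma~\ref{Commute} removes, and that the generalization over $\Gamma$ is what makes the induction hypothesis applicable once the binder has been crossed.
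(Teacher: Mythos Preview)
Your proof is correct and is essentially the paper's argument: structural induction on $M$, with Lemma~\ref{Commute} doing the real work in the binder case. The paper organizes it slightly differently: it does not introduce an auxiliary context $\Gamma$ but simply leaves $F$ universally quantified over the grammar $F ::= \{\y\x\} \mid F_\x$, so the induction hypothesis for $N$ can be instantiated at $F_\x$; in the variable and $\up N$ cases it then case-splits on the outermost shape of $F$ rather than on the length of $\Gamma$. Your ``$\Gamma = nil$ vs.\ $\Gamma = \Delta,\w$'' split is exactly the paper's ``$F = \{\y\x\}$ vs.\ $F = G_\x$'' split, so the two organizations are interchangeable.

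One small inconsistency to clean up: your treatment of $\up N$ at $\Gamma = nil$ invokes $\{\y\x\}(\up P) = \up P$, which presumes the outer $F$ is a base renaming, whereas your opening claim that ``the lemma is then the case $\Gamma = nil$'' presumes $F$ is arbitrary. These cannot both hold. The clean fix is to take $F$ to be a base renaming $\{\y\x\}$ throughout; then every function in the grammar is some $F_\Gamma$, so the full lemma is your generalized statement over all $\Gamma$ (not merely its $\Gamma = nil$ instance), and your base cases go through as written.
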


\begin{proof} Induction on the structure of $M$.\\[5pt]
Case 1. If $M=\lambda\x N$ then\\
$F(db_{\z}(\lambda\x N))=F\lambda\z\{\z\x\}db_{\z}(N)=\lambda\z F_{\z}\{\z\x\}db_{\z}(N)=\lambda\z\{\z\x\} F_{\x}\,db_{\z}(N)$\\
the last equality by Lemma~\ref{Commute}\\
$db_{\z}(F\lambda\x N)=db_{\z}(\lambda\x F_{\x}N)=\lambda\z\{\z\x\}db_{\z}(F_{\x}N)$\\
but $F_{\x}\,db_{\z}(N)=db_{\z}(F_{\x}N)$ by induction hypothesis.\\[5pt]
Case 2. If $M$ is a variable $\y$ then\\
 $F(db_{\z}\y)=F\y$\\
 $db_{\z}(F\y)=F\y$ because $F\y$ always is a variable or $\underbrace{\up\ldots\up}_{n}$ var.\\
 More rigorously, use induction on the structure of $F$ (see Tab.~\ref{Terms}).\\
 Suppose $F$ is $\{\x_1\x_2\}$ and prove\\
 $db_{\z}(\{\x_1\x_2\}\y)=\{\x_1\x_2\}\y$\\
 If $\y=\x_2$ then\\
 $db_{\z}(\{\x_1\x_2\}\x_2)=db_{\z}(\x_1)=\x_1=\{\x_1\x_2\}\x_2$\\
 If $\y\neq\x_2$ then\\
 $db_{\z}(\{\x_1\x_2\}\y)=db_{\z}(\up\y)=\,\up db_{\z}(\y)=\,\up\y=\{\x_1\x_2\}\y$\\
 Now prove the induction step\\[5pt]
 \ruleone{db_{\z}(F\y)=F\y}{db_{\z}(F_{\x}\y)= F_{\x}\y}\\[5pt]
 If $\y=\x$ then \\
 $db_{\z}(F_{\x}\,\x)=db_{\z}(\x)=\x= F_{\x}\,\x$\\
 If $\y\neq\x$ then\\
 $db_{\z}(F_{\x}\,\y)=db_{\z}(\up F\y)=\,\up db_{\z}(F\y)$\\
 $F_{\x}\,\y=\,\up F\y$\\[5pt]
 Case 3. Suppose $M=\,\up N$. \\
 If $F$ is $\{\x_2,\x_1\}$ then \\
 $\{\x_1,\x_2\}\,db_{\z}(\up N)=\{\x_1,\x_2\}\!\up db_{\z}(N)=\,\up db_{\z}(N)$\\
 $db_{\z}(\{\x_1,\x_2\}\!\up N)=db_{\z}(\up N)=\,\up db_{\z}(N)$\\
 If function has the form $F_{\x}$ then (induction on the structure of $M$)\\[5pt]
 \ruleone{db_{\z}(FN)=FN}{db_{\z}(F_{\x}\!\up N)= F_{\x}\!\up N}\\[5pt]
 because\\
 $db_{\z}(F_{\x}\!\up N)=db_{\z}(\up FN)=\,\up db_{\z}(FN)$\\
 $F_{\x}\!\up N=\,\up FN$\\[5pt]
 Case 4. $M=N_1N_2$\\
 Straighforward.

\end{proof}

\begin{theorem} If $M=_{\alpha} N$ then $db_{\z}(M)=db_{\z}(N)$
\end{theorem}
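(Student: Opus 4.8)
The plan is to induct on the derivation of $M=_{\alpha}N$, following the rules in Tab.~\ref{Alpha}. Since the conclusion $db_{\z}(M)=db_{\z}(N)$ asserts plain syntactic equality, the reflexivity, symmetry, and transitivity rules are handled for free: syntactic equality is itself reflexive, symmetric, and transitive, so those three rules impose no work beyond invoking the induction hypotheses.

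For the three congruence rules I would use that $db_{\z}$ commutes with each term constructor. From the definition, $db_{\z}(\up M)=\,\up db_{\z}(M)$ and $db_{\z}(MN)=db_{\z}(M)db_{\z}(N)$, so the weakening and application congruences follow by rewriting both sides and applying the induction hypotheses. For the $\lambda$-congruence rule, which keeps the \emph{same} bound variable $\x$ on both sides, I would rewrite $db_{\z}(\lambda\x M_i)=\lambda\z\{\z\x\}db_{\z}(M_i)$ and then use the induction hypothesis $db_{\z}(M_1)=db_{\z}(M_2)$ to conclude.

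The only case with real content is the axiom $\lambda\x M=_{\alpha}\lambda\y\{\y\x\}M$, where I must show the two sides have literally equal de Bruijn forms. Expanding the left side gives $db_{\z}(\lambda\x M)=\lambda\z\{\z\x\}db_{\z}(M)$. Expanding the right side gives $db_{\z}(\lambda\y\{\y\x\}M)=\lambda\z\{\z\y\}db_{\z}(\{\y\x\}M)$, and by the previous lemma $db_{\z}(\{\y\x\}M)=\{\y\x\}db_{\z}(M)$, so the right side becomes $\lambda\z\{\z\y\}\{\y\x\}db_{\z}(M)$. The two are reconciled by Lemma~\ref{Trans}, which collapses $\{\z\y\}\{\y\x\}db_{\z}(M)$ to $\{\z\x\}db_{\z}(M)$; both sides then equal $\lambda\z\{\z\x\}db_{\z}(M)$.

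The main obstacle is exactly this axiom case: it is the single point where the bound variable changes, and it is where the two earlier lemmas must line up — the commutation lemma $F(db_{\z}M)=db_{\z}(FM)$ to pull the renaming $\{\y\x\}$ through $db_{\z}$, and the composition identity of Lemma~\ref{Trans} to fuse the resulting chain $\{\z\y\}\{\y\x\}$ into the single $\{\z\x\}$ produced by the left-hand side. Everything else is bookkeeping that falls straight out of the recursive clauses defining $db_{\z}$.
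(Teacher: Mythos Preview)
Your proposal is correct and follows essentially the same route as the paper: both reduce the problem to the single axiom case $\lambda\x M=_{\alpha}\lambda\y\{\y\x\}M$ and dispatch it by unfolding $db_{\z}$, invoking the commutation lemma $F(db_{\z}M)=db_{\z}(FM)$ to pull $\{\y\x\}$ outside, and then applying Lemma~\ref{Trans} to collapse $\{\z\y\}\{\y\x\}$ into $\{\z\x\}$. The paper simply suppresses the routine induction over the derivation that you spell out explicitly.
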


\begin{proof}We need to prove $db_{\z}(\lambda \x M)=db_{\z}(\lambda\y\{\y\x\}M)$\\
$db_{\z}(\lambda \x M)=\lambda\z\{\z\x\}db_{\z}(M)$\\
$db_{\z}(\lambda\y\{\y\x\}M)=$\\
$=\lambda\z\{\z\y\}db_{\z}(\{\y\x\}M)$\\
$=\lambda\z\{\z\y\}\{\y\x\}db_{\z}(M)$\quad (by the previous lemma)\\
$=\lambda\z\{\z\x\}db_{\z}(M)$\qquad\quad (by Lemma~\ref{Trans})
\end{proof}

\begin{theorem} $\lambda\x M=_{\alpha}\lambda\y N$ iff $\{\z\x\}M=_{\alpha}\{\z\y\}N$
\end{theorem}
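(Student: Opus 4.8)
The plan is to reduce the biconditional to the de Bruijn characterization, which has the great advantage of converting $=_{\alpha}$ into plain syntactic equality of terms. The two immediately preceding theorems together say that $M=_{\alpha}N$ iff $db_{\z}(M)=db_{\z}(N)$, and I would use this equivalence in both directions. The other ingredient is the commutation lemma $F(db_{\z}M)=db_{\z}(FM)$, which lets me push a function through the translation.

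First I would translate the left-hand equivalence. By the characterization, $\lambda\x M=_{\alpha}\lambda\y N$ holds iff $db_{\z}(\lambda\x M)=db_{\z}(\lambda\y N)$ as terms. Unfolding the defining clause $db_{\z}(\lambda\x M)=\lambda\z\{\z\x\}db_{\z}(M)$ on both sides rewrites this as $\lambda\z\{\z\x\}db_{\z}(M)=\lambda\z\{\z\y\}db_{\z}(N)$. Since both sides now carry the same literal head $\lambda\z$, syntactic equality of the two abstractions is equivalent to syntactic equality of their bodies, namely $\{\z\x\}db_{\z}(M)=\{\z\y\}db_{\z}(N)$.

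Next I would apply the commutation lemma to each side, rewriting $\{\z\x\}db_{\z}(M)=db_{\z}(\{\z\x\}M)$ and likewise on the right, so the condition becomes $db_{\z}(\{\z\x\}M)=db_{\z}(\{\z\y\}N)$. A final use of the characterization, this time in the reverse direction, turns this back into $\{\z\x\}M=_{\alpha}\{\z\y\}N$. Because every step in this chain is itself an iff, composing them yields exactly the claimed equivalence, so the proof needs no separate treatment of the two directions.

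I expect the only delicate point to be the binder-stripping step, where I cancel the outer $\lambda\z$. This is legitimate precisely because the de Bruijn translation rewrites every abstraction using the single fixed variable $\z$, so the two translated terms genuinely share the head $\lambda\z$ and equality of bodies really is equivalent to equality of the whole terms. The analogous cancellation applied to raw $\lambda$-terms with distinct bound variables would be unsound, and it is exactly the normalization carried out by $db_{\z}$ that makes the move valid here; everything else is a routine substitution of the earlier results.
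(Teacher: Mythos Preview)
Your argument is correct: the chain of equivalences goes through, and each step is justified by results already available at this point in the paper (the two theorems giving the $db_{\z}$ characterization of $=_{\alpha}$, and the commutation lemma $F(db_{\z}M)=db_{\z}(FM)$). The binder-stripping step is fine for exactly the reason you give.

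The paper's proof is organized differently, and the difference is worth noting. For the backward implication the paper does \emph{not} use $db_{\z}$ at all: from $\{\z\x\}M=_{\alpha}\{\z\y\}N$ one passes to $\lambda\z\{\z\x\}M=_{\alpha}\lambda\z\{\z\y\}N$ by the congruence rule and then invokes the $\alpha$-axiom $\lambda\x M=_{\alpha}\lambda\z\{\z\x\}M$ on each side. This is more elementary than your route, which imports the whole $db_{\z}$ machinery for a direction that does not need it. For the forward implication the paper also reaches $\{\z\x\}db_{\z}(M)=\{\z\y\}db_{\z}(N)$ exactly as you do, but then closes the gap differently: instead of the commutation lemma it uses $M=_{\alpha}db_{\z}(M)$ together with the theorem that functions preserve $=_{\alpha}$, obtaining $\{\z\x\}M=_{\alpha}\{\z\x\}db_{\z}(M)=\{\z\y\}db_{\z}(N)=_{\alpha}\{\z\y\}N$. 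Your use of the commutation lemma is a perfectly good alternative here, and it has the pleasant by-product of making the whole proof a single iff chain; what you give up is the self-contained, $db_{\z}$-free argument for the easy direction.
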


\begin{proof} Suppose $\{\z\x\}M=_{\alpha}\{\z\y\}N$. Then\\
$\lambda\z\{\z\x\}M=_{\alpha}\lambda\z\{\z\y\}N$\\
 and\\
$\lambda\x M=_{\alpha}\lambda\z\{\z\x\}M=_{\alpha}\lambda\z\{\z\y\}N=_{\alpha}\lambda\y N$\\
Suppose $\lambda\x M=_{\alpha}\lambda\y N$. Then\\
$db_{\z}(\lambda\x M)=db_{\z}(\lambda\y N)$\\
$\lambda\z\{\z\x\}db_{\z}(M)=\lambda\z\{\z\y\}db_{\z}(N)$\\
$\{\z\x\}db_{\z}(M)=\{\z\y\}db_{\z}(N)$\\
and $\{\z\x\}M=_{\alpha}\{\z\y\}N$ because\\
$M=_{\alpha}db_{\z}(M)$\\
$N=_{\alpha}db_{\z}(N)$

\end{proof}


\section{de Bruijn's terms 2}

\begin{figure}\caption{Inference rules}\label{Inference}
\begin{framed}
\noindent
\begin{align*}
&nil\vdash\x\\[5pt]
&\Gamma,\x\vdash\x\\[5pt]
&\ruleone{\Gamma\vdash \x}{\Gamma,\z\vdash \x} \qquad(\z\neq\x)\\[5pt]
&\ruletwo{\Gamma\vdash M}{\Gamma\vdash N}{\Gamma\vdash MN}\\[5pt]
&\ruleone{nil\vdash M}{nil\vdash\, \uparrow\! M} \\[5pt]
&\ruleone{\Gamma\vdash M}{\Gamma,\x\vdash\, \uparrow\! M} \\[5pt]
&\ruleone{\Gamma,\x\vdash M}{\Gamma\vdash \lambda\x M}
\end{align*}
\end{framed}
\end{figure}

The set of inference rules is shown in Tab.~\ref{Inference}. Note that for each $\Gamma,M$ there is a unique rule with the conclusion $\Gamma\vdash M$

\begin{lemma} For each $\Gamma,M$ there is a unique derivation of $\Gamma\vdash M$
\end{lemma}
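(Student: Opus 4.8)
The plan is to exploit the syntax-directedness recorded in the note just above the statement: for every pair $(\Gamma, M)$ there is exactly one inference rule whose conclusion is $\Gamma \vdash M$, and once that rule is fixed its premises are completely determined by $(\Gamma, M)$. Consequently a derivation of $\Gamma \vdash M$ involves no genuine choice at any node --- the last rule is forced, hence so are the judgments sitting immediately above it, and so on recursively. The two things left to prove are that this forced process always terminates in axioms (existence of a derivation) and that it never branches into two distinct derivations (uniqueness). Both will follow from a single well-founded induction.

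First I would fix a size measure $\mu(\Gamma, M) = |\Gamma| + |M|$, where $|\Gamma|$ is the length of the context (counted with repetitions) and $|M|$ is the number of symbols of $M$, counting each occurrence of $\up$, of $\lambda$, each variable, and each application node as one symbol. I would then check, rule by rule, that $\mu$ strictly decreases from the conclusion to each premise. This is immediate for rules 3--6: peeling a variable off the context (rule 3), splitting an application (rule 4), or stripping a $\up$ (rules 5 and 6) all shrink $|\Gamma| + |M|$. The one case that deserves attention is the $\lambda$-rule, whose conclusion $\Gamma \vdash \lambda\x M$ has premise $\Gamma,\x \vdash M$: here the context grows by one, but $|\lambda\x M| = |M| + 2$, so $\mu$ still drops by $1$ overall. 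This is exactly the point where a naive measure (say $|M|$ alone, or $|\Gamma|$ alone) would fail, and it is the main thing to get right.

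With the measure in hand, existence is a strong induction on $\mu(\Gamma, M)$: apply the unique rule whose conclusion is $\Gamma \vdash M$; its premises (none, for the axioms $nil\vdash\x$ and $\Gamma,\x\vdash\x$; one or two otherwise) all have strictly smaller $\mu$ and hence are derivable by the induction hypothesis, so assembling them under that rule yields a derivation of $\Gamma \vdash M$. Uniqueness is the same induction run in parallel: given two derivations of $\Gamma \vdash M$, each must end with the forced rule and therefore with identical premise judgments; applying the induction hypothesis to each premise (again of smaller $\mu$) shows the two subderivations coincide, so the whole derivations agree. Thus there is one and only one derivation of $\Gamma \vdash M$, and I expect the only real subtlety to be the bookkeeping for the $\lambda$-rule described above.
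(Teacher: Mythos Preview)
Your argument is correct. The only difference from the paper is organizational: the paper proceeds by structural induction on $M$ with $\Gamma$ universally quantified in the induction hypothesis (and a nested induction on the length of $\Gamma$ only in the base case $M=\x$), whereas you merge both into a single well-founded induction on $|\Gamma|+|M|$. The paper's scheme sidesteps the bookkeeping you flag for the $\lambda$-rule --- since the IH already quantifies over all contexts, the growth of $\Gamma$ in the premise $\Gamma,\x\vdash N$ is harmless and only the strict subterm $N$ matters --- so its proof is a bit shorter; your combined measure, on the other hand, makes the termination argument uniform across all rules at the cost of that one calculation. Either way the substance is the same: determinism of the rule matching $(\Gamma,M)$ forces both existence and uniqueness by induction.
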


\begin{proof} Induction on the structure of $M$.\\[5pt]
Case 1. $M$ is a variable $\x$. Induction over the length of $\Gamma$.\\
If $\Gamma=nil$ then the unique inference is
 $nil\vdash\x$\\
If $\Gamma=\Delta,\z$ then\\
$\Delta,\x\vdash\x$\quad\, if $\z=\x$\\[5pt]
\ruleone{\Delta\vdash \x}{\Delta,\z\vdash \x} \quad\, \text{if} $\z\neq\x$\\[5pt]
Case 2. And so on (nothing more to prove).
\end{proof}

\newpage

\begin{example} The (unique) inference of $nil\vdash \lambda x\lambda y\, xyz$\\[5pt]
\ruleone
{\ruletwo
{\ruletwo
{\ruleone
{x\vdash x}
{x,y\vdash x}
}
{x,y\vdash y}
{x,y\vdash xy}
}
{\ruleone
{nil\vdash z}
{\ruleone
{x\vdash z}
{x,y\vdash z}
}
}
{x,y\vdash xyz}
}
{\ruleone
{x\vdash \lambda y\, xyz}
{nil\vdash \lambda x\lambda y\, xyz}
}

\end{example}

\begin{figure}\caption{Generalized de Bruijn's terms}\label{deBruijn}
\begin{framed}
\noindent
\begin{align*}
\x,\y,\z::&= x\mid y\mid z\mid\ldots \tag{Variables}\\
A,B::&= \x \mid \underline{1} \mid AB \mid \lambda A  \mid \,\up A \tag{Terms}
\end{align*}
\begin{align*}
&\|nil\vdash\x\|=\x\\[5pt]
&\|\Gamma,\x\vdash\x\|=\underline{1}\\[5pt]
&\ruleone{\|\Gamma\vdash \x\|=A}{\|\Gamma,\z\vdash \x\|=\,\uparrow\! A} \tag{$\z\neq\x$}\\[5pt]
&\ruletwo{\|\Gamma\vdash M\|=A}{\|\Gamma\vdash N\|=B}{\|\Gamma\vdash MN\|=AB}\\[5pt]
&\ruleone{\|nil\vdash M\|=A}{\|nil\vdash\, \uparrow\! M\|=\,\uparrow\! A} \\[5pt]
&\ruleone{\|\Gamma\vdash M\|=A}{\|\Gamma,\x\vdash\, \uparrow\! M\|=\,\uparrow\! A} \\[5pt]
&\ruleone{\|\Gamma,\x\vdash M\|=A}{\|\Gamma\vdash \lambda\x M\|=\lambda A}
\end{align*}
\end{framed}
\end{figure}

\begin{definition} The set of generalized de Bruin's terms is shown in Tab.~\ref{deBruijn}. For each $\Gamma,M$ we put in correspondence the generalized de Bruin's term\linebreak $\|\Gamma\vdash M\|$ as shown in Tab.~\ref{deBruijn}. Note that we can write these rules shorter:\\
\begin{align*}
\|nil\vdash\x\| &=\x\\
\|\Gamma,\x\vdash\x\| &=\underline{1}\\
\|\Gamma,\z\vdash \x\| &=\,\up\|\Gamma\vdash \x\|\quad \text{if}\,\,\z\neq\x\\
\|\Gamma\vdash MN\| &=\|\Gamma\vdash M\|\|\Gamma\vdash N\|\\
\|nil\vdash\, \uparrow\! M\| &=\,\up\|nil\vdash M\|\\
\|\Gamma,\x\vdash\, \uparrow\! M\| &=\,\up\|\Gamma\vdash M\|\\
\|\Gamma\vdash \lambda\x M\| &=\lambda\|\Gamma,\x\vdash M\|
\end{align*}
\end{definition}

\begin{example}  $\|nil\vdash \lambda x\lambda y\, xyz\|=\lambda\lambda (\up\underline{1})\underline{1}\up\,\up z$\\
because\\[5pt]
\ruleone
{\ruletwo
{\ruletwo
{\ruleone
{\|x\vdash x\|=\underline{1}}
{\|x,y\vdash x\|=\,\up\underline{1}}
}
{\|x,y\vdash y\|=\underline{1}}
{\|x,y\vdash xy\|=(\up\underline{1})\underline{1}}
}
{\ruleone
{\|nil\vdash z\|=z}
{\ruleone
{\|x\vdash z\|=\,\up z}
{\|x,y\vdash z\|=\,\up\,\up z}
}
}
{\|x,y\vdash xyz\|= (\up\underline{1})\underline{1}\up\,\up z}
}
{\ruleone
{\|x\vdash \lambda y\, xyz\|=\lambda (\up\underline{1})\underline{1}\up\,\up z}
{\|nil\vdash \lambda x\lambda y\, xyz\|=\lambda\lambda (\up\underline{1})\underline{1}\up\,\up z}
}

\end{example}

\begin{theorem} If $M=_{\alpha}N$ then $\|\Gamma\vdash M\|=\|\Gamma\vdash N\|$
for arbitrary $\Gamma$.

\end{theorem}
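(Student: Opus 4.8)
The plan is to imitate the two earlier theorems of this kind (the compatibility of $F$ and of $db_{\z}$ with $=_{\alpha}$). Since $=_{\alpha}$ is the least compatible equivalence relation containing the schema $\lambda\x M=_{\alpha}\lambda\y\{\y\x\}M$, it suffices to check that the assignment $M\mapsto\|\Gamma\vdash M\|$ sends this schema to an identity. Concretely I would argue that the relation $M\sim N$ defined by ``$\|\Gamma\vdash M\|=\|\Gamma\vdash N\|$ for every $\Gamma$'' is a compatible equivalence relation containing the schema, hence contains $=_{\alpha}$. The equivalence properties are immediate, and compatibility with application and with $\lambda$ is read straight off the clauses $\|\Gamma\vdash MN\|=\|\Gamma\vdash M\|\,\|\Gamma\vdash N\|$ and $\|\Gamma\vdash\lambda\x M\|=\lambda\|\Gamma,\x\vdash M\|$. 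The only clause needing a second glance is $\up$: the clauses $\|nil\vdash\up M\|=\up\|nil\vdash M\|$ and $\|\Gamma,\x\vdash\up M\|=\up\|\Gamma\vdash M\|$ strip the last context variable, so compatibility with $\up$ appeals to the hypothesis at the shorter context — which is available exactly because $\sim$ quantifies over all $\Gamma$.

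It then remains to verify the schema, i.e. $\|\Gamma\vdash\lambda\x M\|=\|\Gamma\vdash\lambda\y\{\y\x\}M\|$ for every $\Gamma$. Applying the $\lambda$-clause to both sides turns this into
\[
\lambda\|\Gamma,\x\vdash M\|=\lambda\|\Gamma,\y\vdash\{\y\x\}M\|,
\]
so the whole theorem reduces to the single identity $\|\Gamma,\x\vdash M\|=\|\Gamma,\y\vdash\{\y\x\}M\|$.

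As in Lemma~\ref{Trans} and Lemma~\ref{Commute}, I expect the clean induction to require a generalization that accounts for passing under binders, namely
\[
\|\Gamma,\x,\Delta\vdash M\|=\|\Gamma,\y,\Delta\vdash\{\y\x\}_{\Delta}M\|\qquad(\Gamma,\Delta\ \text{arbitrary}),
\]
proved by induction on the structure of $M$. The abstraction case $M=\lambda\z N$ unfolds, via the $\lambda$-clause and $\{\y\x\}_{\Delta}\lambda\z N=\lambda\z\{\y\x\}_{\Delta,\z}N$, to the induction hypothesis with $\Delta$ replaced by $\Delta,\z$; the application case is immediate; and the weakening case $M=\up N$ splits on whether $\Delta$ is $nil$ or $\Delta',\w$, in each case using the matching $\up$-clause together with the noted identity $\{\y\x\}_{\Delta',\w}\up N=\up\{\y\x\}_{\Delta'}N$ (when $\Delta=nil$ both sides collapse to $\up\|\Gamma\vdash N\|$, since $\{\y\x\}\up N=\up N$).

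The real work — and the main obstacle — is the variable case $M=\z$, which I would settle by a secondary induction on the length of $\Delta$. For $\Delta=nil$ one compares $\z=\x$ (both sides give $\un$, using $\{\y\x\}\x=\y$ and $\|\Gamma,\x\vdash\x\|=\|\Gamma,\y\vdash\y\|=\un$) with $\z\neq\x$ (both sides give $\up\|\Gamma\vdash\z\|$, using $\{\y\x\}\z=\up\z$). For $\Delta=\Delta',\w$ one compares $\z=\w$ (both sides $\un$, using $\{\y\x\}_{\Delta',\w}\w=\w$) with $\z\neq\w$ (both sides $\up$ of the length-$\Delta'$ instance, i.e. the inner induction hypothesis, using $\{\y\x\}_{\Delta',\w}\z=\up\{\y\x\}_{\Delta'}\z$). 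The crux is exactly the matching up of these de Bruijn indices: one must check that the renaming of $\x$ to $\y$ leaves the distinguished variable at the same depth in the two contexts, so that $\un$ is emitted at the same position, while any $\z$ captured inside $\Delta$ is untouched by $\{\y\x\}_{\Delta}$ and receives the same index on both sides. Everything else is bookkeeping that runs parallel to the proofs of Lemmas~\ref{Trans} and~\ref{Commute}.
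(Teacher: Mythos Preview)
Your proposal is correct and follows essentially the same route as the paper: reduce to the schema instance, then prove the generalized identity $\|\Gamma,\x,\Delta\vdash M\|=\|\Gamma,\y,\Delta\vdash\{\y\x\}_{\Delta}M\|$ by structural induction on $M$, with a secondary induction on $|\Delta|$ in the variable case and the $nil$/non-$nil$ split on $\Delta$ in the $\up$ case. Your explicit framing of the reduction via a compatible equivalence relation is a slight elaboration of what the paper leaves implicit, but the technical content is identical.
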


\begin{proof} We have to prove\\
$\|\Gamma\vdash\lambda\x M\|=\|\Gamma\vdash\lambda\y\{\y\x\}M\|$\\
It is enough to prove\\
$\|\Gamma,\x\vdash M\|=\|\Gamma,\y\vdash\{\y\x\}M\|$\\
because\\
$\|\Gamma\vdash\lambda\x M\|=\lambda\|\Gamma,\x\vdash M\|$\\
$\|\Gamma\vdash\lambda\y\{\y\x\}M\|=\lambda\|\Gamma,\y\vdash\{\y\x\}M\|$\\
We shall prove the stronger statement\\
$\|\Gamma,\x,\Delta\vdash M\|=\|\Gamma,\y,\Delta\vdash\{\y\x\}_{\Delta}\,M\|$\\
for arbitrary $\Delta$. Induction on the structure of $M$.\\[5pt]
Case 1. $M=\lambda\z N$. Prove the induction step\\[5pt]
\ruleone{\|\Gamma,\x,\Delta,\z\vdash N\|=\|\Gamma,\y,\Delta,\z\vdash\{\y\x\}_{\Delta,\z}\,N\|}
{\|\Gamma,\x,\Delta\vdash \lambda\z N\|=\|\Gamma,\y,\Delta\vdash\{\y\x\}_{\Delta}\,\lambda\z N\|}\\[5pt]
This is true because\\
$\|\Gamma,\x,\Delta\vdash \lambda\z N\|=\lambda\|\Gamma,\x,\Delta,\z\vdash N\|$\\
$\|\Gamma,\y,\Delta\vdash\{\y\x\}_{\Delta}\,\lambda\z N\|=\|\Gamma,\y,\Delta\vdash\lambda\z\{\y\x\}_{\Delta,\z}\, N\|=\lambda\|\Gamma,\y,\Delta,\z\vdash\{\y\x\}_{\Delta,\z}\,N\|$\\[5pt]
Case 2. $M$ is a variable $\z$. We have to prove\\
$\|\Gamma,\x,\Delta\vdash \z\|=\|\Gamma,\y,\Delta\vdash\{\y\x\}_{\Delta}\,\z\|$\\
Induction over the length of $\Delta$.\\
Suppose $\Delta=nil$ and prove\\
$\|\Gamma,\x\vdash \z\|=\|\Gamma,\y\vdash\{\y\x\}\z\|$\\
If $\z=\x$ then\\
$\|\Gamma,\x\vdash \x\|=\un$\\
$\|\Gamma,\y\vdash\{\y\x\}\x\|=\|\Gamma,\y\vdash \y\|=\un$\\
If $\z\neq\x$ then\\
$\|\Gamma,\x\vdash \z\|=\,\up \|\Gamma\vdash \z\|$\\
$\|\Gamma,\y\vdash\{\y\x\}\z\|=\|\Gamma,\y\vdash\,\up \z\|=\,\up \|\Gamma\vdash \z\|$\\
Suppose $\Delta=\Sigma,\w$ and prove the induction step\\[5pt]
\ruleone{\|\Gamma,\x,\Sigma\vdash \z\|=\|\Gamma,\y,\Sigma\vdash\{\y\x\}_{\Sigma}\,\z\|}
{\|\Gamma,\x,\Sigma,\w\vdash \z\|=\|\Gamma,\y,\Sigma,\w\vdash\{\y\x\}_{\Sigma,\w}\,\z\|}\\[5pt]
If $\z=\w$ then\\
$\|\Gamma,\x,\Sigma,\w\vdash \w\|=\un$\\
$\|\Gamma,\y,\Sigma,\w\vdash\{\y\x\}_{\Sigma,\w}\,\w\|=\|\Gamma,\x,\Sigma,\w\vdash \w\|=\un$\\
If $\z\neq\w$ then\\
$\|\Gamma,\x,\Sigma,\w\vdash \z\|=\,\up \|\Gamma,\x,\Sigma\vdash \z\|$\\
$\|\Gamma,\y,\Sigma,\w\vdash\{\y\x\}_{\Sigma,\w}\,\z\|=\|\Gamma,\y,\Sigma,\w\vdash\,\up\{\y\x\}_{\Sigma}\,\z\|=
\,\up\|\Gamma,\y,\Sigma\vdash\{\y\x\}_{\Sigma}\,\z\|$\\[5pt]
Case 3. $M=\,\up N$.\\
 Suppose $\Delta=nil$, then \\
$\|\Gamma,\x\vdash\, \up N\|=\|\Gamma,\y\vdash\{\y\x\}\!\up N\|$\\
because\\
$\|\Gamma,\x\vdash\, \up N\|=\,\up \|\Gamma\vdash N\|$\\
$\|\Gamma,\y\vdash\{\y\x\}\!\up N\|=\|\Gamma,\y\vdash\,\up N\|=\,\up \|\Gamma\vdash N\|$\\
Suppose $\Delta=\Sigma,\w$ and prove the induction step (induction on the structure of $M$)\\[5pt]
\ruleone{\|\Gamma,\x,\Sigma\vdash N\|=\|\Gamma,\y,\Sigma\vdash\{\y\x\}_{\Sigma}\,N\|}
{\|\Gamma,\x,\Sigma,\w\vdash\, \up N\|=\|\Gamma,\y,\Sigma,\w\vdash\{\y\x\}_{\Sigma,\x}\!\up N\|}\\[5pt]
This is true because\\
$\|\Gamma,\x,\Sigma,\w\vdash\, \up N\|=\,\up\|\Gamma,\x,\Sigma\vdash N\|$\\
$\|\Gamma,\y,\Sigma,\w\vdash\{\y\x\}_{\Sigma,\x}\!\up N\|=\|\Gamma,\y,\Sigma,\w\vdash\,\up\{\y\x\}_{\Sigma}\, N\|=
\,\up \|\Gamma,\y,\Sigma\vdash\{\y\x\}_{\Sigma}\,N\|$\\[5pt]
Case 4. $M=N_1N_2$\\
Straightforward.

\end{proof}

Now we shall prove that $\|nil\vdash M\|=\|nil\vdash N\|$ implies $M=_{\alpha} N$

\begin{definition} For each $\z$ and $\Gamma$ we define the function $\{\z/\Gamma\}$ as follows\\
$\{\z/nil\}M=M$\\
$\{\z/\x,\Delta\}M=\{\z/\Delta\}\{\z\x\}_{\Delta}M$\\
For example\\
$\{\z/\x\}M=\{\z\x\}M$\\
$\{\z/\x_1,\x_2\}M=\{\z\x_2\}\{\z\x_1\}_{\x_2}M$\\
$\{\z/\x_1,\x_2,\x_3\}=\{\z\x_3\}\{\z\x_2\}_{\x_3}\{\z\x_1\}_{\x_2,\x_3}M$\\
Note that\\
$\{\z/\Delta,\x\}\!\up N=\,\up\{\z/\Delta\}N$\\
$\{\z/\Delta,\x\}\x=\z$\\
$\{\z/\Delta,\x\}\y=\,\up\{\z/\Delta\}\y$\quad if $\y\neq\x$\\
(easy induction)
\end{definition}

\begin{example} $\lambda \x_1\lambda\x_2 M=_{\alpha}\lambda\z\lambda\z\{\z/\x_1,\x_2\}M$\\
$\lambda \x_1\lambda\x_2 M\\
=_{\alpha}\lambda\z\{\z\x_1\}\lambda\x_2 M\\
=\lambda\z\lambda\x_2\{\z\x_1\}_{\x_2}M\\
=_{\alpha}\lambda\z\lambda\z\{\z\x_2\}\{\z\x_1\}_{\x_2}M$\\
\end{example}

\begin{lemma}$\{\z/\Gamma\}\lambda\z\{\z\x\}M=\lambda\z\,\{\z/\Gamma,\x\}M$\label{qq}
\end{lemma}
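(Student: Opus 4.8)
The plan is to prove the identity by induction on the context $\Gamma$, peeling off its \emph{leftmost} variable---the direction in which $\{\z/-\}$ is defined---while keeping $M$ (and $\x,\z$) universally quantified, so that the induction hypothesis can later be reapplied to a \emph{different} term. For the base case $\Gamma=nil$ both sides collapse at once: $\{\z/nil\}\lambda\z\{\z\x\}M=\lambda\z\{\z\x\}M$ by the definition of $\{\z/nil\}$, while $\lambda\z\{\z/\x\}M=\lambda\z\{\z\x\}M$ since $\{\z/\x\}=\{\z\x\}$.

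For the inductive step I write $\Gamma=\w,\Gamma'$ and compute, unfolding the definition of $\{\z/-\}$ on the left, pushing the renaming through the binder via $F(\lambda\z N)=\lambda\z F_{\z}N$, invoking Lemma~\ref{Commute}, applying the induction hypothesis, and folding the definition back:
\begin{align*}
\{\z/\w,\Gamma'\}\lambda\z\{\z\x\}M
&=\{\z/\Gamma'\}\,\{\z\w\}_{\Gamma'}\,\lambda\z\{\z\x\}M\\
&=\{\z/\Gamma'\}\,\lambda\z\,\{\z\w\}_{\Gamma',\z}\{\z\x\}M\\
&=\{\z/\Gamma'\}\,\lambda\z\{\z\x\}\,\{\z\w\}_{\Gamma',\x}M\\
&=\lambda\z\,\{\z/\Gamma',\x\}\,\{\z\w\}_{\Gamma',\x}M\\
&=\lambda\z\,\{\z/\w,\Gamma',\x\}M.
\end{align*}
Here the third line is exactly Lemma~\ref{Commute} read from right to left, with $F:=\{\z\w\}_{\Gamma'}$: it rewrites $\{\z\w\}_{\Gamma',\z}\{\z\x\}M$ as $\{\z\x\}\{\z\w\}_{\Gamma',\x}M$. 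The fourth line is the induction hypothesis applied with $\{\z\w\}_{\Gamma',\x}M$ substituted for $M$, and the last line folds the definition $\{\z/\w,\Delta\}=\{\z/\Delta\}\{\z\w\}_{\Delta}$ back with $\Delta=\Gamma',\x$. Since $\w,\Gamma',\x=\Gamma,\x$, this is the desired right-hand side.

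The only non-bookkeeping step, and hence the crux, is the appeal to Lemma~\ref{Commute}: it is what restores the shape $\{\z\x\}(\cdots)$ inside the binder, which is precisely the pattern the induction hypothesis requires. Everything else is unfolding and folding of the definition of $\{\z/-\}$ together with the action $F(\lambda\z N)=\lambda\z F_{\z}N$ of a subscripted renaming on a $\lambda$. The one point demanding care is the quantifier discipline: the induction hypothesis must be stated for \emph{all} $M$ (with $\x,\z$ fixed but arbitrary), since it is invoked on the term $\{\z\w\}_{\Gamma',\x}M$ rather than on $M$ itself.
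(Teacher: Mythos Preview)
Your proof is correct and follows exactly the same route as the paper's: induction on $\Gamma$ peeling off the leftmost variable, unfolding $\{\z/-\}$, pushing the renaming under $\lambda\z$, applying Lemma~\ref{Commute} to restore the $\{\z\x\}(\cdots)$ shape, invoking the induction hypothesis on the modified term, and folding the definition back. The only cosmetic differences are that you spell out the base case and the quantifier discipline explicitly, whereas the paper leaves both implicit.
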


\begin{proof} Induction over the length of $\Gamma$.\\
Suppose $\Gamma=\y,\Delta$ and prove the induction step\\[5pt]
\ruleone{\{\z/\Delta\}\lambda\z\{\z\x\}N=\lambda\z\{\z/\Delta,\x\}N}
{\{\z/\y,\Delta\}\lambda\z\{\z\x\}M=\lambda\z\,\{\z/\y,\Delta,\x\}M}\\[5pt]
where $N=\{\z\y\}_{\Delta,\x}M$\\[5pt]
$\{\z/\y,\Delta\}\lambda\z\{\z\x\}M=\\
=\{\z/\Delta\}\{\z\y\}_{\Delta}\lambda\z\{\z\x\}M\\
=\{\z/\Delta\}\lambda\z\{\z\y\}_{\Delta,\z}\{\z\x\}M\\
=\{\z/\Delta\}\lambda\z\{\z\x\}\{\z\y\}_{\Delta,\x}M$\quad by Lemma~\ref{Commute}\\
$=\lambda\z\{\z/\Delta,\x\}\{\z\y\}_{\Delta,\x}M$\qquad by induction hypothesis\\
$=\lambda\z\{\z/\y,\Delta,\x\}M$

\end{proof}

\begin{definition} For each generalized de Bruijn's term $A$ and variable $\z$ we put in correspondence the term $db_{\z}(A)$ (note that $db_{\z}(A)\in\Lambda\!\up$\,)
\newpage\par\noindent
$db_{\z}(\x)=\x$\\
$db_{\z}(\underline{1})=\z$\\
$db_{\z}(\lambda A)=\lambda\z\, db_{\z}(A)$\\
$db_{\z}(AB)=db_{\z}(A)db_{\z}(B)$\\
$db_{\z}(\up A)=\,\up db_{\z}(A)$

\end{definition}

\begin{example}$ $\\
$db_{z}(\lambda\lambda (\up\underline{1})\underline{1}\up\,\up z)=\lambda z\lambda z(\up z)z\up\,\up z$\\
$db_{x}(\lambda\lambda (\up\underline{1})\underline{1}\up\,\up z)=\lambda x\lambda x(\up x)x\up\,\up z$
\end{example}

\begin{theorem} $db_{\z}\|nil\vdash M\|=db_{\z}(M)$

\end{theorem}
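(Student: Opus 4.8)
The plan is to strengthen the statement to an arbitrary context and prove
$db_{\z}\|\Gamma\vdash M\|=\{\z/\Gamma\}\,db_{\z}(M)$ for all $\Gamma$; taking $\Gamma=nil$ and using $\{\z/nil\}N=N$ then recovers the theorem. I would prove this by induction on the structure of $M$, in the same four-case format as the earlier theorems of the section, with an inner induction on the length of $\Gamma$ for the variable case. The correct generalization is found by testing small examples: for $\Gamma=\x$ and $M=\x$ one gets $db_{\z}(\un)=\z=\{\z\x\}\x$, which is exactly $\{\z/\x\}db_{\z}(\x)$, confirming that the context-indexed translation records each binder in the same way that the iterated renaming $\{\z/\Gamma\}$ does.

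The lambda case carries the real content. For $M=\lambda\x N$ I would expand the left side using $\|\Gamma\vdash\lambda\x N\|=\lambda\|\Gamma,\x\vdash N\|$ and $db_{\z}(\lambda A)=\lambda\z\,db_{\z}(A)$, obtaining $db_{\z}\|\Gamma\vdash\lambda\x N\|=\lambda\z\,db_{\z}\|\Gamma,\x\vdash N\|$, and then apply the induction hypothesis for $N$ over the extended context $\Gamma,\x$ to rewrite this as $\lambda\z\,\{\z/\Gamma,\x\}db_{\z}(N)$. On the right side, $\{\z/\Gamma\}db_{\z}(\lambda\x N)=\{\z/\Gamma\}\lambda\z\{\z\x\}db_{\z}(N)$, and Lemma~\ref{qq} collapses this to the same term $\lambda\z\,\{\z/\Gamma,\x\}db_{\z}(N)$. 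I expect this matching step to be the main obstacle, since it is where the two different bookkeeping mechanisms for a bound variable must be reconciled, and Lemma~\ref{qq} is precisely the tool that does it (its own proof already absorbs the commutation Lemma~\ref{Commute}).

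For the variable case $M=\x$ I would induct on the length of $\Gamma$. When $\Gamma=nil$ both sides are just $\x$. When $\Gamma=\Delta,\w$ I split on whether $\w=\x$: if so, $\|\Delta,\x\vdash\x\|=\un$ gives $db_{\z}(\un)=\z$ on the left, matching the noted identity $\{\z/\Delta,\x\}\x=\z$ on the right; if $\w\neq\x$, the rule $\|\Delta,\w\vdash\x\|=\,\up\|\Delta\vdash\x\|$ pushes a $\up$ outward, and the hypothesis for $\Delta$ together with $\{\z/\Delta,\w\}\x=\,\up\{\z/\Delta\}\x$ closes the case. The case $M=\,\up N$ is analogous: for $\Gamma=nil$ both sides reduce to $\up db_{\z}(N)$, and for $\Gamma=\Delta,\w$ the rule for $\|\Delta,\w\vdash\,\up N\|$ and the identity $\{\z/\Delta,\w\}\up N=\,\up\{\z/\Delta\}N$ reduce it to the hypothesis for $N$ over $\Delta$. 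Finally $M=N_1N_2$ is immediate, since every $F$, and hence $\{\z/\Gamma\}$, distributes over application.
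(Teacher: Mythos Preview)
Your proposal is correct and matches the paper's proof essentially step for step: the same strengthening to $db_{\z}\|\Gamma\vdash M\|=\{\z/\Gamma\}\,db_{\z}(M)$, the same structural induction on $M$ with an inner induction on $|\Gamma|$ in the variable case, and the same appeal to Lemma~\ref{qq} to reconcile the two sides in the $\lambda$-case. The only cosmetic difference is notation for the last context variable.
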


\begin{proof} We shall prove the stronger statement: for each $\Gamma$\\
$db_{\z}\|\Gamma\vdash M\|=\{\z/\Gamma\}db_{\z}(M)$\\
Induction on the structure of $M$.\\[5pt]
Case 1. $M=\lambda\x N$\\
Prove the induction step\\[5pt]
\ruleone{db_{\z}\|\Gamma,\x\vdash N\|=\{\z/\Gamma,\x\}db_{\z}(N)}
{db_{\z}\|\Gamma\vdash \lambda\x N\|=\{\z/\Gamma\}db_{\z}(\lambda\x N)}\\[5pt]
Note that
$\|\Gamma\vdash \lambda\x N\|=\lambda\|\Gamma,\x\vdash N\|$\\
hence\\
$db_{\z}\|\Gamma\vdash \lambda\x N\|=\\
=\lambda\z\, db_{\z}\|\Gamma,\x\vdash N\|\\
=\lambda\z\, \{\z/\Gamma,\x\}db_{\z}(N)$\qquad by induction hypothesis\\
But\\
$db_{\z}(\lambda\x N)=\lambda\z\{\z\x\}db_{\z}(N)$\\
hence\\
$\{\z/\Gamma\}db_{\z}(\lambda\x N)=\\
=\{\z/\Gamma\}\lambda\z\{\z\x\}db_{\z}(N)\\
=\lambda\z\,\{\z/\Gamma,\x\}db_{\z}(N)$\qquad by Lemma~\ref{qq}\\[5pt]
Case 2. $M$ is a variable $\x$. Prove that\\
$db_{\z}\|\Gamma\vdash \x\|=\{\z/\Gamma\}db_{\z}(\x)$\\
Induction on the length of $\Gamma$.\\
Suppose $\Gamma=nil$ then\\
$db_{\z}\|nil\vdash \x\|=\x=db_{\z}(\x)$\\
Suppose $\Gamma=\Delta,\y$ and prove the induction step\\[5pt]
\ruleone{db_{\z}\|\Delta\vdash \x\|=\{\z/\Delta\}db_{\z}(\x)}
{db_{\z}\|\Delta,\y\vdash \x\|=\{\z/\Delta,\y\}db_{\z}(\x)}\\[5pt]
If $\y=\x$ then\\
$db_{\z}\|\Delta,\x\vdash \x\|=db_{\z}(\un)=\z$\\
$\{\z/\Delta,\x\}db_{\z}(\x)=\{\z/\Delta,\x\}\x=\z$\\[5pt]
If $\y\neq\x$ then\\
$db_{\z}\|\Delta,\y\vdash \x\|=db_{\z}(\up\|\Delta\vdash \x\|)=\,\up db_{\z}(\|\Delta\vdash \x\|)$\\
$\{\z/\Delta,\y\}db_{\z}(\x)=\{\z/\Delta,\y\}\x=\,\up\{\z/\Delta\}\x$\\[5pt]
Case 3. $M=\,\up N$\\
Suppose $\Gamma=nil$ and prove the induction step\\[5pt]
\ruleone{db_{\z}\|nil\vdash N\|=db_{\z}(N)}
{db_{\z}\|nil\vdash\,\up N\|=db_{\z}(\up N)}\\[5pt]
This is true because\\
$db_{\z}\|nil\vdash\,\up N\|=db_{\z}(\up\|nil\vdash N\|)=\,\up db_{\z}\|nil\vdash N\|$\\
$db_{\z}(\up N)=\,\up db_{\z}(N)$\\
Suppose $\Gamma=\Delta,\y$ and prove the induction step (induction on the structure of $M$)\\[5pt]
\ruleone{db_{\z}\|\Delta\vdash N\|=\{\z/\Delta\}db_{\z}(N)}
{db_{\z}\|\Delta,\y\vdash\,\up N\|=\{\z/\Delta,\y\}db_{\z}(\up N)}\\[5pt]
This is true because\\
$db_{\z}\|\Delta,\y\vdash\,\up N\|=db_{\z}(\up\|\Delta\vdash N\|)=\,\up db_{\z}\|\Delta\vdash N\|$\\
$\{\z/\Delta,\y\}db_{\z}(\up N)=\{\z/\Delta,\y\}\!\up db_{\z}(N)=\,\up \{\z/\Delta\}db_{\z}(N)$\\[5pt]
Case 4. $M=N_1N_2$\\
Straightforward.
\end{proof}


\end{document}